\newtheorem{prop}{Proposition}
\def\rem{{\noindent{\bf Remark.}\ }}
\newenvironment{proof}[1][Proof]{\noindent\textbf{#1.} }{\
  \rule{0.5em}{0.5em}}
\def\sds{\strut \displaystyle}
\title{On Solving A Generalized Chinese Remainder Theorem in the Presence of Remainder Errors}
\author{Guangwu Xu\thanks{Department of EE \& CS, University of Wisconsin-Milwaukee,
Milwaukee, WI 53211, USA;
e-mail: {\tt gxu4uwm@uwm.edu}.
Research
supported in part by the National 973 Project of China (No.
2013CB834205).}
}
\date{}
\begin{document}
\maketitle

\begin{abstract} In estimating frequencies given that the signal
waveforms are undersampled multiple times, Xia et. al.
proposed to use a generalized version of Chinese remainder Theorem (CRT),
where the moduli are $M_1, M_2, \cdots, M_k$ which are not necessarily pairwise coprime.
If the errors of the
corrupted remainders are within $\tau=\sds \max_{1\le i\le k} \min_{\stackrel{1\le j\le k}{j\neq i}} \frac{\gcd(M_i,M_j)}4$,
their schemes can be used to
 construct an approximation of the solution to the generalized CRT
with an error smaller than $\tau$. Accurately finding the quotients is
a critical ingredient in their approach.
In this paper, we shall start with a faithful historical account of the generalized CRT.
We then present two
treatments of the problem of solving generalized CRT with erroneous remainders.
 The first treatment follows the route of Wang and Xia to find
the quotients, but with a
simplified process. The second treatment considers a simplified model of generalized CRT and
takes a different approach by working on the
corrupted remainders directly. This approach also reveals some useful information about
the remainders by inspecting extreme values of the erroneous remainders modulo $4\tau$.
Both of our treatments produce efficient algorithms with essentially optimal performance.
Finally, this paper constructs a counterexample to prove the sharpness of the error bound $\tau$.

\end{abstract}

\noindent{\bf Keywords: Chinese Remainder Theorem, remainder errors, reconstruction, sharp bound}

\section{Introduction}
The usual Chinese Remainder Theorem (CRT) concerns  reconstructing an integer given its remainders with respect to
a set of coprime moduli. More precisely, let ${\cal M}=\{m_1, m_2, \dots , m_k\}$ be a set of pairwise
coprime positive integers and $\sds M=\prod_{i=1}^k m_i$. For a set of
integers $r_1, r_2, \dots, r_k$ with $0\le r_i < m_i$, the Chinese
Remainder Theorem says that the system of congruences
\begin{equation}\label{eq:crt}
\left\{ \begin{array}{l} x\equiv r_1 \pmod {m_1} \\
                               x\equiv r_2 \pmod {m_2} \\
                                \cdots \\
                               x\equiv r_k \pmod {m_k}
\end{array} \right.
\end{equation}
has a unique solution $0\le x < M$. In fact, using the extended
Euclidean algorithm (Aryabhattiya algorithm), one finds integers $u_1, u_2, \cdots, u_k$ such
that
\begin{equation}\label{daiyanfor1}
\sum_{i=1}^k u_i \frac{M}{m_i} = 1.
\end{equation}
This implies that for any integer $N$, we have $\sds N=\sum_{i=1}^k N u_i \frac{M}{m_i}$. In particular, if $N\equiv r_i \pmod {m_i}$ for
$i = 1, 2, \cdots, k$, then
\begin{equation}\label{eq:2.1}
N = \sum_{i=1}^k r_i u_i \frac{M}{m_i} \pmod M
\end{equation}
i.e., the right hand side gives us the desired solution. For a greater efficiency, one can also use a randomized algorithm by Cooperman, Feisel, von zur Gathen, and Havasin
in \cite{CFGH} to get the coefficients $\{u_i: i=1,\cdots, k\}$ of (\ref{daiyanfor1}), see \cite{dlx} for a concrete description.

The Chinese Remainder Theorem is a classical tool that is of both theoretical and practical interests.  A solution of the number theoretic
version of CRT was given by Jiushao Qin (aka Ch'in Chiu-shao) \cite{Qin} under the name of
``method of DaiYan aggregation''. The key subroutine in this method is the ``method of DaYan deriving one'' (which is exactly the method of
finding modulo inverse). We would like to remark that the CRT discussed in \cite{Qin} is actually in a general form, namely,
the moduli are not assumed to be pairwise coprime ( and the book did describe a way of reducing the general case to the case of pairwise coprime moduli ),  see also \cite{Ore,struik,XuLi}.
A form that is similar to (\ref{daiyanfor1}) was discussed in
\cite{Qin} as well.
It is also interesting to note that the proof of the ring theoretic version of CRT (which contains the finite discrete
Fourier transform as a special case) uses the similar idea as that from \cite{Qin}.

One of the natural applications of CRT is in
parallel computing, as CRT can be interpreted as an isomorphism between a ring (of bigger size) and a direct product of rings (of smaller
sizes). Since arithmetic operations on those smaller rings are closed and independent, an operation in the bigger ring may be
decomposed into operations in those smaller rings in parallel \cite{dlx,gxu}. Applications of CRT in fast computation can be found
in cryptography \cite{stinson}, e.g., fast decryption in RSA, solving the discrete logarithm with a composite modulus, and
the fast point counting method for elliptic curves . In a series of work by Xia and his collaborators, a
generalized CRT (the so called Robust CRT) has been used
to solve problems of estimating frequencies given that the signal
waveforms are undersampled multiple times, \cite{llx,WangXia,XiaWang}. For example, in \cite{WangXia},  Wang and  Xia discussed the following form of CRT, which
is to construct an approximation to the solution of the system of congruences
\begin{equation}\label{simpleVer}
\left\{ \begin{array}{l} x\equiv r_1 \pmod {dm_1} \\
                               x\equiv r_2 \pmod {dm_2} \\
                                \cdots \\
                               x\equiv r_k \pmod {dm_k}
\end{array} \right.
\end{equation}
given that the remainders are corrupted. They design a nice scheme to solve the problem by first finding the quotients $q_i=\frac{x-r_i}{m_i}$.
They proved that if the remainder errors are bounded by $\frac{d}4$, the quotients can be computed exactly.
Recently, Xiao, Xia and Wang \cite{XiaoXiaWang} proposed a new robust CRT that accommodates more general situation where
the great common divisors of the pairs of moduli $\{M_1, M_2, \cdots, M_k\}$ are not necessarily the same. In the case, the
error bound is changed to $\tau=\sds \max_{1\le i\le k} \min_{\stackrel{1\le j\le k}{j\neq i}} \frac{\gcd(M_i,M_j)}4$.
We note that the multi-stage use of robust CRT proposed in \cite{XiaoXiaWang} provides an interesting way of
lifting some restriction on error bound.

In this paper, we shall start with a faithful interpretation of the work of Qin  \cite{Qin} concerning CRT. In our discussion
of ``method of DaYan aggregation'', we will describe an exact translation of
Qin's ``method of DaYan deriving one'' into using modern pseudo-code \cite{XuLi}. Compared with those appeared in
literature, our description reflected Qin's original idea. The second part of the paper
studies the problem of recovering solution of generalized CRT under the presence of noisy remainders
by giving alternative solutions and more efficient computational procedures.
We shall mainly work with the simple model (\ref{simpleVer}), as the general case can be dealt with essentially by
one of the approaches in our discussion. We present two treatments to solve this problem. The first treatment follows the route of Wang and Xia to  find
the quotients, but with a simplified process. The second treatment takes a different approach by working on the
corrupted remainders directly. This approach also reveals some useful information about
the remainders by inspecting extreme values of the erroneous remainders modulo $d$.
The technical difficulties mentioned in \cite{WangXia}
have been successfully avoided by the method proposed in our second
treatment. Both of our treatments produce efficient algorithms that invoke CRT only once,
and they are essentially optimal in this sense. We remark that our first treatment can handle the
case with more general moduli.

It is a natural question to ask whether the
remainder error bound $\frac{d}4$ is sharp. We provide an affirmative answer to this question
by constructing a simple counterexample.

The rest of our discussion will be divided into three sections. After introducing  Qin's ``DaYan aggregation'' method
with some discussions and comments in Section 2, the
Section 3 describes the problem setup and provides solutions. Section 4 is
the conclusion.

\section{On Qin's methods of  ``DaYan Aggregation'' and ``DaYan Deriving One''}
In his book ``The Mathematical Treatise in Nine Sections'' \cite{Qin} (published in 1247),  Qin
described ``DaYan aggregation'' method which is a generalized version of
the Chinese Remainder Theorem. This method is aimed at solving system of congruence equations of the following form
\begin{equation}\label{genCRT0}
\left\{ \begin{array}{l} x\equiv  r_1  \pmod {M_1} \\
                               x\equiv  r_2   \pmod {M_2} \\
                                \cdots \\
                               x\equiv  r_k  \pmod {M_k}
\end{array} \right.
\end{equation}
without assuming that the moduli are pairwise coprime, but requiring $\gcd(M_i,M_j)| r_i-r_j$. Qin's outlined an efficient
 procedure that reduces (\ref{genCRT0}) to the
following form (the usual Chinese remainder representation):
\begin{equation}\label{CRT}
\left\{ \begin{array}{l} x\equiv r_1 \pmod {c_1} \\
                               x\equiv r_2 \pmod {c_2} \\
                                \cdots \\
                               x\equiv r_k \pmod {c_k}
\end{array} \right.
\end{equation}
with a suitable set $\{c_1, c_2,\cdots, c_k\}$ of pairwise coprime positive integers such that $c_i|M_i$ and
$\mbox{lcm}(M_1, M_2,\cdots,M_k)=c_1c_1\cdots c_k$.

The main technical
component of ``DaYan aggregation'' is ``DaYan deriving one'', which is Qin's method of computing modulo inverse:
given positive integers $m> a>1$, compute $a^{-1} \pmod m$.
The ``DaYan deriving one'' states (see also \cite{Libbrect} \footnote{The interpretation from \cite{Libbrect} (page 331) contains some error.}):

{\bf DaYan deriving one}:
\begin{itemize}
\item {\tt Set up the number $a$ at the right hand above, the number $m$ at the right hand below.
Set  $1$ at the left hand above.}

\item {\tt First divide the `right below' by the `right above', and the quotient\\
obtained, multiply it by the $1$ of `left above' and add it to `left below'}.

\item {\tt  After this, in the `upper' and `lower' of the right column,
divide larger number by smaller one. Transmit  and divide them by each other.
Next bring over the quotient obtained and [cross-] multiply with each other. Add the `upper' and the `lower' of the
left column.}

\item {\tt One has to go until the last remainder of the `right above' is $1$ and then one can stop.
Then you examine the result of `left above'; take it as the modulo inverse.}
\end{itemize}

This ancient procedure is very close to a modern pseudo-code. It keeps a state of four variables in a form
of $2\times 2$ matrix:  $ \begin{pmatrix}\mbox{\tt left-above} & \mbox{\tt righ-above} \\\mbox{\tt left-below} & \mbox{\tt right-below} \end{pmatrix} \triangleq\begin{pmatrix}x_{11}&x_{12}\\x_{21}&x_{22}\end{pmatrix} $ and the initial state is $\begin{pmatrix}1 & a \\0 & m \end{pmatrix}$. Then
the procedure executes steps which are exactly a while-loop. The termination condition of the while-loop is ``{\tt until the last remainder of the `right above' is $1$}''.

This termination condition seems not to have been interpreted correctly (see several papers in \cite{Wu_ed}). In literature, the usual
(positive) integer division
\[
  c = \bigg\lfloor \frac{c}d\bigg\rfloor d + r
  \]
is used and the remainder $r$ is the least nonnegative residue modulo $b$, i.e., $0\le r <b$. With this division,
it may happen that {\tt `right below'} becomes  $1$ first which makes ``{\tt until the last remainder of the `right above' is $1$}'' (i.e. $x_{21}=1$) to be
unachievable. The existing research papers from \cite{Wu_ed} and the book \cite{Libbrect}  suggested that one should modify Qin's procedure  when  $x_{22}=1$ is seen, in order to
make $x_{21}=1$. However, we believe that Qin made no mistake in his termination condition, namely, after an even number of steps (this is another interesting
fact of Qin's design), $x_{21}=1$ can always be achieved. In \cite{XuLi}, a detailed explanation has been given.
We shall make a brief account here: in ancient China, another form of division could be used.
Such a setting takes the remainder to be the least positive residue modulo the divisor. As an example, a divination method using ``I Ching'' (Book of Change, 1000-400 BC)
is to generate a hexagram by the manipulation of $50$ yarrow stalks. In this process, division by $4$ is used and the remainder
must belong to $\{1,2,3,4\}$. It should be noted that Qin also described this divination method in his book \cite{Qin}. This division can be expressed as:
for positive integers $c$ and $d$, there is a unique $r$ with $1\le r\le d$, such that
\[
  c = \bigg\lfloor \frac{c-1}d\bigg\rfloor d + r.
\]
The remainder $r$ is the least positive residue modulo  $d$. This sort of division is also mentioned  in \cite{Wang}.

This kind of division enables us to formulate Qin's algorithm in modern language which is faithful to his original idea; in particular,
$x_{21}=1$ can always be achieved \cite{XuLi}.
\begin{center}
\begin{tabular}{|l|}
\hline
${}\quad $ {\bf Qin's Algorithm: DaYan Deriving One} \\
 \hline
{\bf Input:} $\quad a, m$ with $1<a<m, \gcd(a, m)=1$,\\
{\bf Output:} positive integer $u$ such $ua \equiv 1 \pmod m$.\\
 \hline
 \hskip 0.2cm $\begin{pmatrix}x_{11}&x_{12}\\x_{21}&x_{22}\end{pmatrix} \gets \begin{pmatrix}1 & a \\0 & m \end{pmatrix}$;\\
 \hskip 0.2cm {\tt while } ($x_{12} > 1 $) {\tt do }\\
 \hskip 0.5cm {\tt if ( $x_{22} > x_{12}$ )}\\
 \hskip 0.5cm${}\quad\quad  q \gets \lfloor \frac{x_{22}-1}{x_{12}} \rfloor$ ; \\
 \hskip 0.5cm${}\quad\quad  r \gets x_{22} - q x_{12};$\\
 \hskip 0.5cm${}\quad\quad  x_{21} \gets q x_{11}+x_{21};$\\
 \hskip 0.5cm${}\quad\quad  x_{22} \gets r;$\\
 \hskip 0.5cm {\tt if ( $x_{12} > x_{22}$ )}\\
\hskip 0.5cm ${}\quad\quad  q \gets \lfloor \frac{x_{12}-1}{x_{22}} \rfloor$ ; \\
\hskip 0.5cm ${}\quad\quad  r \gets x_{12} - q x_{22};$\\
\hskip 0.5cm${}\quad\quad  x_{11} \gets q x_{21}+x_{11};$\\
 \hskip 0.5cm${}\quad\quad  x_{12} \gets r;$\\
 %\hskip 0.5cm${}\quad\quad  \begin{pmatrix}x_{11}&x_{12}\\x_{21}&x_{22}\end{pmatrix} \gets \begin{pmatrix} qx_{21}+x_{11}&r\\x_{21}& x_{22}\end{pmatrix};$\\
 \hskip 0.2cm $u\gets x_{11}$;\\
 \hline
 \end{tabular}
\end{center}
We have the following remarks on this formulation of Qin's algorithm.
\begin{enumerate}
\item Notice that when $x_{22} > x_{12}$, the variable $x_{12}$ will not be updated. As we start with $m>a$, the value of $x_{12}$
 can only be updated at an even numbered  step. In particular, we see a very interesting fact: $x_{12}=1$ will be reached in an even number of steps.
\item If at an odd numbered step, $x_{22}=1$, then $x_{12} > x_{22}$ must hold. In this case $q  = x_{12} - 1, r = 1$. So
the next step gives $x_{12}  = 1$.
\item Set $v = -\frac{ua-1}m$, we get the B\'ezout identity:  $ua+vm =1$.
%\item We also remark that Qin's algorithm can be easily modified to treat the case of $\gcd(a, m) = d>1$. One just needs to use the condition
%($x_{12} \neq x_{22} $) inside the { \tt while } loop and get
%\[
%ua+vm =d
%\] for $v= -\frac{ua-1}m$ .
\item Another important observation of  Qin's algorithm is that the permanent of the state $\quad\quad$ $\begin{pmatrix}x_{11}&x_{12}\\x_{21}&x_{22}\end{pmatrix} $
is an invariant:
\[%begin{equation}\label{perm}
x_{11}x_{22}+x_{12}x_{21} = m.
\]%end{equation}
\end{enumerate}
Qin used his method of ``DaYan deriving one'' to solve the generalized CRT (\ref{genCRT0}). We would like to
remark that Qin paid a special attention to the following  relation
\begin{equation}\label{eq:fanyong}
\sum_{i=1}^k \bigg(\bigg(\frac{M}{c_i}\bigg)^{-1}\pmod {c_i}\bigg) \frac{M}{c_i} = 1+ gM,
\end{equation}
where $M=c_1\cdots c_k$ and $g$ is a positive integer. This also leads to a direct formula of CRT as
that we  did in (\ref{eq:2.1}).

\section{Problem Setup and Solutions}

Our discussion will be mainly for the simple model (\ref{simpleVer}). Namely, we will discuss solving the problem of following generalized CRT
\begin{equation}\label{eq:gencrt}
\left\{ \begin{array}{l} x\equiv r_1 \pmod {dm_1} \\
                               x\equiv r_2 \pmod {dm_2} \\
                                \cdots \\
                               x\equiv r_k \pmod {dm_k}
\end{array} \right.
\end{equation}
in the presence of remainder errors. We will
give two different solutions to the problem, and one of them can be easily adopted to dealing with the general case.

It is clear that when $d>1$, the existence of a solution $0\le x <dM$ is equivalent to $r_i-r_j \equiv 0\pmod d \quad \mbox{ for all } i, j.$
Moreover, as mentioned in \cite{WangXia}£¬ the  solution  $N$ of (\ref{simpleVer}) satisfies
\begin{equation}\label{eq:gensol}
N=dN_0+ a
\end{equation}
where $a = r_1 \pmod d $ and $0\le N_0<M$ is the solution of the CRT system
\[
\left\{ \begin{array}{l} y\equiv  \gamma_1  \pmod {m_1} \\
                               y\equiv  \gamma_2   \pmod {m_2} \\
                                \cdots \\
                               y\equiv  \gamma_k  \pmod {m_k}
\end{array} \right.
\]
where $\gamma_i=\frac{r_i-a}d, \ i=1,2,\cdots, k$.

As discussed at the beginning, the solution of (\ref{eq:gencrt}) can  also be written as
\[
N = \sum_{i=1}^k r_i u_i \frac{M}{m_i} \pmod {dM},
\]
where $\sds M=\prod_{i=1}^k m_i$ and  $\sds \sum_{i=1}^k u_i \frac{M}{m_i}=1$ is the relation (\ref{daiyanfor1}).
However, the solution of the form (\ref{eq:gensol}) is of certain interest because it reveals the fact that the parameter $d$
introduces redundancy.  Small changes of $r_i$'s will not
affect the integral part of $\frac{r_i-a}d$ for most cases. It is noted that $a$ is another important parameter, and prior
knowledge or estimation of this number may be useful in getting a better approximation of $N$.
 We shall call $a= r_1 \pmod d $\footnote{Throughout this paper,
the expression $ g = h \pmod m$ means that $g$ is the least nonnegative remainder of $h$ modulo $m$, i.e.,
$m|(g-h)$ and $0\le g < m$.} the {\sl common remainder} modulo $d$.

Given a corrupt set of remainders
$\overline{r}_1,\cdots,\overline{r}_k$\footnote{We assume that each of these observed remainder satisfies $0\le \overline{r}_i<dm_i$. Otherwise, errors can be reduced by
simply setting $\overline{r}_i=dm_i-1 \ ( \mbox{ or } \overline{r}_i=0 )$ if $\overline{r}_i\ge dm_i \ (  \mbox{ or } \overline{r}_i<0 )$. } with errors $\Delta r_i =\overline{r}_i-r_i$, our
task is to find a good approximation $\overline{N}$ to the solution $N$ of (\ref{eq:gencrt}).
We shall assume $|\Delta r_i|<\frac{d}4$ for all $i=1,2,\cdots, k$.

\subsection{Reconstruction by finding the quotients}
A nice observation by Wang and Xia \cite{WangXia} is that if  $|\Delta r_i|<\frac{d}4$ for all $i=1, \cdots, k$, then one is able to reconstruct
an approximation $\overline{N}$ of the solution $N$ such that
\[
|N-\overline{N}| <\frac{d}4.
\]
The interesting strategy used in \cite{WangXia} is to consider the quotients $\sds q_i= \left\lfloor \frac{N}{m_i}\right\rfloor =\frac{N-r_i}{m_i}$ and
prove that they are invariants when the remainder errors are bounded by $\frac{d}4$. The procedure in \cite{WangXia}
is summarized as% \vspace{-7mm}

\begin{longtable}{|ll|}
\caption{\bf Algorithm of \cite{WangXia}}\\
% header ------------------------
\hline
\endfirsthead
\multicolumn{2}{l}%
{\tablename\ \thetable\ -- \textit{Continued from previous page}} \\
\hline
\endhead
\hline
\multicolumn{2}{l}{\textit{Continued on next page}} \\
\endfoot
\hline
\endlastfoot
\hline
{\bf Step 1}.& For each $i=2,\cdots,k$ \\
&${}\quad$ find $\Gamma_{i,1} = m_i^{-1} \pmod {m_1}$; \\
&${}\quad$ compute $\overline{\xi}_{i,1}=\left[\frac{\overline{r}_i-\overline{r}_1}d\right]\Gamma_{i,1}\pmod {m_i}$.\\
{\bf Step 2}.& Use CRT to find the solution $\overline{q}_1$: \\
&${}\quad \left\{ \begin{array}{l} \overline{q}_1\equiv  \overline{\xi}_{2,1}  \pmod {m_2} \\
                                \cdots \\
                               \overline{q}_1\equiv  \overline{\xi}_{k,1}  \pmod {m_k}
\end{array} \right.$\\
{\bf Step 3}.& For each $i=2,\cdots,k$ \\
&${}\quad \sds \overline{q}_i = \frac{\overline{q}_1 m_i - \left[\frac{\overline{r}_i-\overline{r}_1}d\right]}{m_i}$.\\
{\bf Step 4}.& For each $i=1,2,\cdots,k$ \\
&${}\quad$ compute $\overline{N}(i) = dm_i + \overline{r}_i$; \\
&Compute $\overline{N} = \left[\frac{1}k \sum_{i=1}^k \overline{N}(i)\right] $.\\
\hline
\end{longtable}
Here $[x]$ denotes the rounding integer, i.e., $x-\frac{1}2 \le [x]<x+\frac{1}2$.

\rem  Under the conditions $|\Delta r_i|<\frac{d}4$,
one has $\left[\frac{\overline{r}_i-\overline{r}_1}d\right]=\frac{r_i-r_1}d$ and hence $\overline{q}_i =q_i$.
It is also remarked that the computation in step 1
is equivalent to an invocation of CRT. Thus, the above algorithm uses two invocations of solving CRT.

The main purpose of this subsection is to present a concise approach to this problem. For each $1\le j \le k$, recall that the quotient  $q_j$ is
$q_j = \frac{N-r_j}{dm_j}.$
Denote $M_j=\frac{M}{m_j}$. From (\ref{daiyanfor1}) ( i.e., $ \sum_{i=1}^k u_i \frac{M}{m_i} = 1$ ), we have
\begin{eqnarray}\label{eq:quotient}
q_j  &=&  q_j u_1 \frac{M}{m_1} +\cdots + q_j u_k \frac{M}{m_k} \equiv \sum_{\stackrel{i=1}{ i\neq j}}^k\frac{r_i-r_{j}}{d} u_i \frac{M}{m_im_j}\pmod{M}\notag\\
& = & \sum_{\stackrel{i=1}{ i\neq j}}^k\frac{r_i-r_{j}}{d} u_i \frac{M_j}{m_i}\pmod{M_j}.
\end{eqnarray}
Note that $q_j$ can be uniquely specified because $0\le q_j<M_j$.

Since  $\left[\frac{\overline{r}_i-\overline{r}_1}d\right]=\frac{r_i-r_1}d$ for $i=2,\cdots,k$, the following is true by (\ref{eq:quotient})
\[
q_1=\overline{q}_1=\sum_{i=2}^k\left[\frac{\overline{r}_i-\overline{r}_{1}}{d}\right] u_i \frac{M_j}{m_i}\pmod{M_1}.
\]
Therefore, we get the following procedure by replacing steps 1 and 2 of the algorithm in \cite{WangXia} with the above discussion.
\begin{longtable}{|ll|}
\caption{\bf Algorithm 1}\\
% header ------------------------
\hline
\endfirsthead
\multicolumn{2}{l}%
{\tablename\ \thetable\ -- \textit{Continued from previous page}} \\
\hline
\endhead
\hline
\multicolumn{2}{l}{\textit{Continued on next page}} \\
\endfoot
\hline
\endlastfoot
\hline
{\bf Step 1}.& Use the extended Euclidean algorithm to get $u_i$ such that \\
&${}\quad\sds \sum_{i=1}^k u_i \frac{M}{m_i} = 1 $. \\
{\bf Step 2}.& Compute  $\overline{q}_1$: \\
&${}\quad \sds\overline{q}_1=\sum_{i=2}^k\left[\frac{\overline{r}_i-\overline{r}_{1}}{d}\right] u_i \frac{M_1}{m_i}\pmod{M_1}$\\
{\bf Step 3}.& For each $i=2,\cdots,k$ \\
&${}\quad \sds \overline{q}_i = \frac{\overline{q}_1 m_1 - \left[\frac{\overline{r}_i-\overline{r}_1}d\right]}{m_i}$.\\
{\bf Step 4}.& For each $i=1,2,\cdots,k$ \\
&${}\quad$ compute $\overline{N}(i) = \overline{q}_i dm_i + \overline{r}_i$; \\
&Compute $\overline{N} = \left[\frac{1}k \sum_{i=1}^k \overline{N}(i)\right] $.\\
\hline
\end{longtable}

\rem 1. Steps 1 and 2 are equivalent to an invocation of CRT. In step 3, $q_i$ may be computed via (\ref{eq:quotient}). One computation of CRT
is saved compared to the corresponding algorithm of \cite{WangXia}.

\rem 2. Recently Xiao, Xia, and Wang \cite{XiaoXiaWang} extended the problem to approximate the solution $N$ of the general Chinese remainder system
(\ref{genCRT0}):
\[
\left\{ \begin{array}{l} x\equiv  r_1  \pmod {M_1} \\
                               x\equiv  r_2   \pmod {M_2} \\
                                \cdots \\
                               x\equiv  r_k  \pmod {M_k}
\end{array} \right.
\]
from erroneous remainders $\overline{r}_1,\cdots,\overline{r}_k$. Where $d_{ij}=\gcd(M_i,M_j)$ can be arbitrary, and the remainder
errors are bounded by $\tau=\sds \max_{1\le i\le k} \min_{\stackrel{1\le j\le k}{j\neq i}} \frac{d_{ij}}4$. Notice that
 (\ref{genCRT0}) has a solution if and only if $d_{ij}|r_i-r_j$ for all $i,j$.
We remark that the
procedure we just described can be adopted to this problem without much difficulty. In fact, assuming $\tau= \min_{2\le j\le k} \frac{d_{1j}}4$ without
loss of generality, we can recover the quotient $q_1=\frac{N-r_1}{M_1}$. To this end, we let $M=\mbox{lcm}\{M_1,\cdots, M_k\}$. Since
$\gcd(\frac{M}{M_1},\cdots, \frac{M}{M_k})=1$, we get integers $v_1,\cdots, v_k$ such that $\sds\sum_{i=1}^k v_i \frac{M}{M_i}=1$. This yields
$\sds\sum_{i=1}^k q_1 v_i \frac{M}{M_i}=q_1$. A routine manipulation gives
\[
q_1 = \frac{r_2-r_1}{d_{12}} v_2\frac{d_{12}M}{M_1M_2}+\cdots+\frac{r_k-r_1}{d_{12}} v_k\frac{d_{1k}M}{M_1M_k}\pmod {\frac{M}{M_1}}.
\]
Since $\frac{r_j-r_1}{d_{1j}}=\left[\frac{\overline{r}_j-\overline{r}_1}{d_{1j}}\right]$ for $j=2,\cdots,k$, $q_1$ can be computed exactly from
the corrupted remainders.

\subsection{Extreme values of the erroneous remainders modulo $d$}
In this subsection, we will take a different approach to solve the problem under the simple model (\ref{eq:gencrt}).
We will use corrupted remainders directly in solving CRT. The
main points we would like to make include
\begin{enumerate}
\item The extreme values such as $\max\{\overline{r}_i \pmod d \}$ and $\min\{\overline{r}_i \pmod d \}$ should be inspected to reveal useful information
about the errors.
\item The common remainder $a$ maybe shifted so that more accurate estimation can be made.
\item The ideas developed in this subsection naturally lead to a proof of the sharpness of the error bound.
\end{enumerate}
 Let $a=r_1 \pmod d$ be the common remainder of $r_1, \cdots, r_k$ modulo $d$.
We define
\begin{eqnarray*}
&& \alpha = \max\{\overline{r}_i \pmod d : i=1,2,\cdots, k\}\\%=\overline{r}_{i_0} \pmod d\\
&& \beta = \min\{\overline{r}_i \pmod d : i=1,2,\cdots, k\}\\%=\overline{r}_{j_0} \pmod d\\
&& \mu=\min\{ \overline{r}_i \pmod d: \overline{r}_i \pmod d >\frac{d}2\}\\%=\overline{r}_{k_0} \pmod d\\
&& \nu=\max\{ \overline{r}_i \pmod d: \overline{r}_i \pmod d <\frac{d}2\}\\%=\overline{r}_{l_0} \pmod d.
\end{eqnarray*}
The numbers $\mu, \nu$ are defined only when the corresponding sets are nonempty.

Since  $|\Delta r_i|<\frac{d}4$ for all $ i= 1,2,\cdots, k$, we have the following five mutually exclusive cases: \vspace{-7mm}
\begin{description}
\item{\bf (a)} For all $ i= 1,2,\cdots, k$, $a+\Delta r_i < 0$.\\
In this case, $\alpha-\beta <\frac{d}4$. We also have $a<\frac{d}4$.
\item{\bf (b)} For all $ i= 1,2,\cdots, k$, $a+\Delta r_i \ge d$. \\
In this case, $\alpha-\beta <\frac{d}4$.
\item{\bf (c)} For all $ i= 1,2,\cdots, k$, $0\le a+\Delta r_i < d$. \\
In this case, $\alpha-\beta < \frac{d}2$.
\item{\bf (d)} There are $i_1$ and $j_1$ such that $a+\Delta r_{i_1} < 0$ and $a+\Delta r_{j_1}\ge 0$. \\
In this case, we must have $a<\frac{d}4$, $\alpha-\beta \ge \mu-\nu > \frac{d}2$
\item{\bf (e)}  There are $i_1$ and $j_1$ such that $a+\Delta r_{i_1} < d$ and $a+\Delta r_{j_1}\ge d$. \\
In this case, $\alpha-\beta \ge \mu-\nu > \frac{d}2$.  We also have $a>\frac{3d}4$
\end{description}

We note that even though these five cases cannot be checked individually (due to the unknown parameters $\Delta r_i$ and $a$), we can still
divide them into two verifiable situations. In fact, it is easy to see that
the condition $\alpha-\beta < \frac{d}2$ covers  cases {\bf (a), (b)} and {\bf (c)}, while the condition
$\alpha-\beta > \frac{d}2$ covers  cases {\bf (d)} and {\bf (e)}.

We shall first get  good approximations of $\gamma_i=\frac{r_i-a}d$
based on these conditions.

\begin{prop}
For $i=1,2, \cdots, k$, set
\[
\overline{\gamma}_i =\left\{ \begin{array}{ll} \left\lfloor \frac{\overline{r}_i}{d}\right\rfloor & \mbox{ if } \alpha-\beta < \frac{d}2,\\
                             \left\lfloor \frac{\overline{r}_i+d-\mu }{d}\right\rfloor & \mbox{ if } \alpha-\beta \ge \frac{d}2.
                             \end{array}\right.
\]
Then we have \vspace{-7mm}
\begin{enumerate}
\item  For case {\bf (a)}, $\overline{\gamma}_i = \gamma_i -1$ holds for all $i=1,2,\cdots, k$.
\item  For cases {\bf (c)} and {\bf (d)}, $\overline{\gamma}_i = \gamma_i $ holds for all $i=1,2,\cdots, k$.
\item  For cases {\bf (b)} and {\bf (e)}, $\overline{\gamma}_i = \gamma_i +1$ holds for all $i=1,2,\cdots, k$.
\end{enumerate}
\end{prop}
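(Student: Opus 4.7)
My plan is a direct case analysis driven by the decomposition
\[
\overline{r}_i \;=\; r_i + \Delta r_i \;=\; d\gamma_i + (a + \Delta r_i),
\]
so everything turns on which of the three ``bands'' $a+\Delta r_i$ falls into. Since $0\le a<d$ and $|\Delta r_i|<d/4$, the band $a+\Delta r_i<0$ forces $a<d/4$ and puts $a+\Delta r_i$ in $(-d/4,0)$; the band $a+\Delta r_i\ge d$ forces $a>3d/4$ and puts $a+\Delta r_i$ in $[d,5d/4)$; and the middle band just leaves $a+\Delta r_i$ in $[0,d)$. Rewriting $\overline{r}_i$ with a nonnegative remainder modulo $d$ then gives, respectively, $\overline{r}_i \bmod d \in (3d/4,d)$ with $\lfloor \overline{r}_i/d\rfloor = \gamma_i-1$, or $\overline{r}_i \bmod d \in [0,d/4)$ with $\lfloor \overline{r}_i/d\rfloor = \gamma_i+1$, or $\overline{r}_i \bmod d = a+\Delta r_i$ with $\lfloor \overline{r}_i/d\rfloor=\gamma_i$.

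For the first formula (used when $\alpha-\beta<d/2$, which covers cases (a), (b), (c) as already noted in the excerpt), these three observations immediately finish the proof: in case (a) every $\lfloor \overline{r}_i/d\rfloor=\gamma_i-1$, in case (b) every $\lfloor \overline{r}_i/d\rfloor=\gamma_i+1$, and in case (c) every $\lfloor \overline{r}_i/d\rfloor=\gamma_i$.

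The more delicate part is the second formula, for cases (d) and (e), where the $\overline{r}_i \bmod d$ split into a ``low'' group in $[0,d/4)\cup[0,d/2)$ and a ``high'' group in $(d/2,d)\cup(3d/4,d)$, and $\mu$ is the minimum of the high group. The plan is to write, for each $i$,
\[
\overline{r}_i + d - \mu \;=\; d\bigl(\lfloor \overline{r}_i/d\rfloor + 1\bigr) + \bigl((\overline{r}_i \bmod d) - \mu\bigr),
\]
and then show that the added $d-\mu$ shifts the floor by exactly the right amount. For case (d) the ``high'' indices are those with $a+\Delta r_{i}<0$, so $\mu = d+a+\min_{G_1}\Delta r_i$: for $i\in G_1$ the residue $(\overline{r}_i\bmod d)-\mu = \Delta r_i - \min_{G_1}\Delta r_j \in [0,d/2)$, hence $\overline{\gamma}_i=\gamma_i-1+1=\gamma_i$; for $j\in G_2$ the residue $(\overline{r}_j\bmod d)-\mu$ is negative but lies in $(-d,0)$, so the floor increases by $0$ and yields $\overline{\gamma}_j=\gamma_j$. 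Case (e) is handled symmetrically, with the ``high'' group being the indices with $a+\Delta r_i<d$, giving $\overline{\gamma}_i=\gamma_i+1$ throughout.

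The main obstacle is just this bookkeeping in cases (d) and (e): one must verify that $(\overline{r}_i\bmod d)-\mu$ stays in $(-d,d)$ with the correct sign pattern in each subgroup, using only the global bound $|\Delta r_i|<d/4$. Once that interval control is in place, the identity above converts each ``shift by $d-\mu$'' into an integer $\pm 1$ correction on the floor, and the stated values of $\overline{\gamma}_i$ drop out uniformly across all $i$.
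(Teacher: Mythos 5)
Your proof is correct and follows essentially the same route as the paper's: decompose $\overline{r}_i = d\gamma_i + (a+\Delta r_i)$, sort each index into one of the three bands for $a+\Delta r_i$, and in cases \textbf{(d)} and \textbf{(e)} control the interval containing $(\overline{r}_i \bmod d) - \mu$ so that the shift by $d-\mu$ changes the floor by exactly $0$ or $1$. The only difference is completeness: you work through all five cases (including the high/low split in \textbf{(d)}), whereas the paper writes out only \textbf{(b)} and \textbf{(e)} and calls the rest straightforward.
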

\begin{proof}
The verification is fairly straightforward. We just consider the cases {\bf (b)} and {\bf (e)} here. We recall that $r_i=\gamma_id+a$.

For case {\bf (b)}, since $a+ \Delta r_i\ge d$ for all $i$, so
\[
\overline{\gamma}_i=\left\lfloor \frac{\overline{r}_i}{d}\right\rfloor=\left\lfloor \frac{\gamma_i d+a+\Delta r_i}{d}\right\rfloor=\gamma_i +1.
\]

For case {\bf (e)}, we know that if $\overline{r}_i \pmod d<\frac{d}2$, then $a+\Delta r_i\ge d$. This implies that
\[
d\le a+\Delta r_i+d-\mu < 2d.
\]
If $\overline{r}_j \pmod d\ge \frac{d}2$, then $a+\Delta r_j< d$ and
$\overline{r}_j \pmod d=a+\Delta r_j$. This, together with the minimality of $\mu$, implies that
\[
d\le a+\Delta r_j+d-\mu < 2d.
\]
Therefore,
\[
\overline{\gamma}_i=\left\lfloor \frac{\overline{r}_i+d-\mu }{d}\right\rfloor=\left\lfloor \frac{\gamma_i d+a+\Delta r_i+d-\mu}{d}\right\rfloor=\gamma_i +1.
\]
\end{proof}

Now let us present an algorithm to get an approximation $\overline{N}$ of $N$, based on the above discussion.
\small{
\begin{longtable}{|ll|}
\caption{\bf Algorithm 2}\\
% header ------------------------
\hline
\endfirsthead
\multicolumn{2}{l}%
{\tablename\ \thetable\ -- \textit{Continued from previous page}} \\
\hline
\endhead
\hline
\multicolumn{2}{l}{\textit{Continued on next page}} \\
\endfoot
\hline
\endlastfoot
\hline
{\bf Step 1}.& Compute $\alpha, \beta$ and $\mu$ / / (if $\mu$ exists ) \\
{\bf Step 2}.& For each $i=2,\cdots,k$, compute\\
     & $\sds\overline{\gamma}_i =\left\{ \begin{array}{ll} \left\lfloor \frac{\overline{r}_i}{d}\right\rfloor & \mbox{ if } \alpha-\beta < \frac{d}2,\\
                             \left\lfloor \frac{\overline{r}_i+d-\mu }{d}\right\rfloor & \mbox{ if } \alpha-\beta \ge \frac{d}2.
                             \end{array}\right.$\\
{\bf Step 3}.& Use CRT to find the solution $\overline{N}_0$: \\
             &  ${}\quad \left\{ \begin{array}{l} \overline{N}_0\equiv  \overline{\gamma}_1 \pmod {m_1} \\
                                  \overline{N}_0\equiv  \overline{\gamma}_2 \pmod {m_2} \\
                                \cdots \\
                               \overline{N}_0\equiv  \overline{\gamma}_k \pmod {m_k} \\
                              \end{array} \right.$\\
{\bf Step 4}.& If $\alpha-\beta < \frac{d}2$\\
             & ${}\quad\quad \overline{N} = d \overline{N}_0 + \left\lfloor \frac{\sum_{i=1}^k (\overline{r}_i\pmod d )} {k}\right\rfloor$\\
             & Else / / (i.e., $\alpha-\beta \ge \frac{d}2$) \\
             & ${}\quad\quad \overline{N} = d \overline{N}_0$.\\
\hline
\end{longtable}
}\normalsize{}
Again, this is an efficient way of finding an approximation of $N$ as
only one CRT computation is required. What left to show is that this approximation is indeed a good one.
\begin{prop} Let $N$ be the solution of (\ref{eq:gencrt}) and $\overline{N}$ be the output of Algorithm 2, we have
\[
|N-\overline{N}| <\frac{d}4.
\]
\end{prop}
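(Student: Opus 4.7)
The plan is to split into the five mutually exclusive cases (a)--(e) of Section~2.2 and then regroup them along the branch Algorithm~2 actually takes: the \emph{small-spread} case $\alpha-\beta<d/2$ (covering (a),(b),(c)) and the \emph{large-spread} case $\alpha-\beta\ge d/2$ (covering (d),(e)). In each of the five cases, Proposition~2 furnishes a single $c\in\{-1,0,+1\}$, uniform in $i$, with $\overline{\gamma}_i=\gamma_i+c$, so the CRT in Step~3 returns $\overline{N}_0\equiv N_0+c\pmod M$. To upgrade this to the integer equality $\overline{N}_0=N_0+c$, I use the constraint $0\le\overline{r}_i<dm_i$ (equivalently $0\le\overline{\gamma}_i<m_i$): in case (a), for example, $a+\Delta r_i<0$ combined with $\overline{r}_i\ge0$ forces each $\gamma_i\ge1$, so $N_0\ge1$ and $\overline{N}_0=N_0-1$, and symmetric boundary checks handle $c=+1$ in cases (b) and (e).

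The large-spread branch is then immediate: Step~4 outputs $\overline{N}=d\overline{N}_0=dN_0+cd$, hence $N-\overline{N}=a-cd$. Case (d) has $c=0$ with $0\le a<d/4$, and case (e) has $c=+1$ with $3d/4<a<d$, giving $N-\overline{N}=a-d\in(-d/4,0)$. Both subcases yield $|N-\overline{N}|<d/4$ directly.

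For the small-spread branch, a quick case-by-case verification gives $\overline{r}_i\bmod d=a+\Delta r_i-cd$ in each of (a),(b),(c). Averaging, and using that $a-cd$ is an integer together with $\overline{N}_0=N_0+c$, Step~4 collapses to
\[
\overline{N}=d(N_0+c)+\lfloor a-cd+\delta\rfloor=dN_0+a+\lfloor\delta\rfloor,\qquad \delta:=\tfrac{1}{k}\sum_{i=1}^k\Delta r_i,
\]
so that $N-\overline{N}=-\lfloor\delta\rfloor$. The delicate point, and the main obstacle, is strictness: the bound $|\delta|<d/4$ alone would permit $|\lfloor\delta\rfloor|=\lceil d/4\rceil$, which equals $d/4$ when $4\mid d$. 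The rescue is the integrality of the errors: setting $B:=\max_i|\Delta r_i|$, the hypothesis $|\Delta r_i|<d/4$ together with $\Delta r_i\in\Z$ forces $B$ to be an integer strictly less than $d/4$. Then $\delta\in[-B,B]$, and because $-B\in\Z$ we also get $\lfloor\delta\rfloor\in[-B,B]$, yielding $|\lfloor\delta\rfloor|\le B<d/4$ as required.
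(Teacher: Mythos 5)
Your proof is correct and follows the same five-case decomposition (a)--(e) and the same algebraic computations as the paper's own proof, so it is essentially the same argument. It is in fact slightly more careful than the paper at two points the paper glosses over: the upgrade from the congruence $\overline{N}_0\equiv N_0+c \pmod{M}$ to the integer equality $\overline{N}_0=N_0+c$ via the boundary constraints $0\le\overline{\gamma}_i<m_i$, and the use of the integrality of the $\Delta r_i$ to obtain the strict inequality $|\lfloor\delta\rfloor|<\frac{d}{4}$ (without which the bound could be attained when $4\mid d$); both patches are valid and welcome.
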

\begin{proof} Let $N_0$ be the solution of (\ref{eq:crt}), then we know that $N=dN_0+a$.

For case {\bf (a)}. In this case, $\overline{N}_0=N_0-1$.
Since all $a+\Delta r_i<0$, we must have $\overline{r}_i \pmod d = d+a+\Delta r_i$ and $-\frac{d}4<a+\Delta r_i<0$. Thus
\begin{eqnarray*}
\overline{N} &=& d \overline{N}_0 + \left\lfloor \frac{\sum_{i=1}^k (\overline{r}_i\pmod d )} {k}\right\rfloor
=dN_0-d + \left\lfloor \frac{\sum_{i=1}^k (d+a+\Delta r_i)} {k}\right\rfloor\\
&=&dN_0-d+\left\lfloor d+a+\frac{\sum_{i=1}^k \Delta r_i} {k}\right\rfloor=N+\left\lfloor \frac{\sum_{i=1}^k \Delta r_i} {k}\right\rfloor.
\end{eqnarray*}
For case {\bf (b)}. In this case, $\overline{N}_0=N_0+1$. Since all $a+\Delta r_i\ge 0$, we must have $\overline{r}_i \pmod d = a+\Delta r_i -d $ and $0\le a+\Delta r_i-d <\frac{d}4$. Thus
\begin{eqnarray*}
\overline{N} &=& d \overline{N}_0 + \left\lfloor \frac{\sum_{i=1}^k (\overline{r}_i\pmod d )} {k}\right\rfloor
=dN_0+d + \left\lfloor \frac{\sum_{i=1}^k (a+\Delta r_i-d)} {k}\right\rfloor\\
&=&dN_0+d+\left\lfloor a-d+\frac{\sum_{i=1}^k \Delta r_i} {k}\right\rfloor=N+\left\lfloor \frac{\sum_{i=1}^k \Delta r_i} {k}\right\rfloor.
\end{eqnarray*}
For case {\bf (c)}. In this case, $\overline{N}_0=N_0$. We know that in this case $\overline{r}_i \pmod d=a+\Delta r_i$, so
\begin{eqnarray*}
\overline{N} &=& d \overline{N}_0 + \left\lfloor \frac{\sum_{i=1}^k (\overline{r}_i\pmod d )} {k}\right\rfloor
=dN_0 + \left\lfloor \frac{\sum_{i=1}^k (a+\Delta r_i)} {k}\right\rfloor\\
&=&dN_0+\left\lfloor a+\frac{\sum_{i=1}^k \Delta r_i} {k}\right\rfloor=N+\left\lfloor \frac{\sum_{i=1}^k \Delta r_i} {k}\right\rfloor
\end{eqnarray*}
So in these three cases, $|N-\overline{N}|<\frac{d}4$ holds.

For case {\bf (d)}. In this case, $\overline{N}_0=N_0$, and $\overline{N} =dN_0$. We know that in this case $a <\frac{d}4$, so
\[
|N-\overline{N}| =a <\frac{d}4.
\]

For case {\bf (e)}. In this case, $\overline{N}_0=N_0+1$ , and $\overline{N} =dN_0+d$. We know that in this case $a >\frac{3d}4$, so
\[
|N-\overline{N}| =d-a <\frac{d}4.
\]
\end{proof}

\subsection{ The sharpness of the condition $\max_{1\le i\le k} |\Delta r_i| \le \frac{d}4$}.
In this subsection, we shall discuss some issues about the error bound.
It is easy to see that if we are given $a=\left\lfloor\frac{d}2\right\rfloor$, then the case where the error bound is as big as
$\left\lfloor\frac{d}2\right\rfloor$ can be handled. However, we have no prior knowledge about $a$ in general.  To the best of our knowledge,
 $\frac{d}4$ is the largest error bound available in literature.  It is thus interesting to ask whether the bound $\frac{d}4$ can be improved.
Recall that in our algorithm 2, we need to deal with the cases $\alpha-\beta <\frac{d}2$ and $\alpha-\beta >\frac{d}2$ differently. This
is very suggestive. Pushing this consideration further, we see that if an error is at least $\frac{d}4$ in magnitude, then such distinction
is no longer available. In fact, we are able to show that the bound $\frac{d}4$ cannot be improved in general (i.e., it
is a sharp bound) by the following simple counterexample.

{\bf Example}. Let $p, q$ be distinct primes, and $d$ be a positive integer divisible by $4$.
We consider solving the system of congruence equations
\begin{equation}\label{eq:exa}
\left\{ \begin{array}{l} x\equiv  r_1 \pmod {d p} \\
                                 x\equiv  r_2 \pmod {d q} \\
                              \end{array} \right.
\end{equation}
in the presence of remainder errors.

Assume that the remainder errors are allowed to be $|\Delta r_1|\le \frac{d}4$ and $|\Delta r_2|\le \frac{d}4$
(these are equivalent to $|\Delta r_i|< \frac{d}4+1$).
We will show that for some corrupted remainders, it is impossible determine an approximation of the true solution.
To this end, suppose we have corrupted remainders
\[
\overline{r}_1 = d, \ \overline{r}_2 = \frac{3d}2.
\]

First, we consider the system (\ref{eq:exa}) with $r_1 = \frac{3d}4, \ r_2=\frac{7d}4=d+\frac{3d}4$ (i.e., in this case $a=\frac{3d}4$).
We get the corresponding solution
\[
N_1=d vp + \frac{3d}4
\]
where $v = p^{-1} \pmod q$.
Since $\overline{r}_1 = r_1+\frac{d}4, \ \overline{r}_2 = r_2 - \frac{d}4$, the erroneous remainders $\overline{r}_1 ,  \overline{r}_2$ are legitimate
for this case.

Next, we consider the system (\ref{eq:exa}) with $r_1 = r_2=\frac{5d}4=d+\frac{d}4$ (i.e., in this case $a=\frac{d}4$). We get the corresponding solution
\[
N_2=\frac{5d}4=d+\frac{d}4.
\]
Since $\overline{r}_1 = r_1-\frac{d}4, \ \overline{r}_2 = r_2 + \frac{d}4$, the erroneous remainders
$\overline{r}_1,  \overline{r}_2$ are legitimate
for this case too.

If we choose $p$ to be large, then $N_1, N_2$ are far apart. Therefore, no approximation
based on $\overline{r}_1 ,  \overline{r}_2$ can be close to both of them. In other words, if
the error bound is bigger than $\frac{d}4$, the problem of
solving  (\ref{eq:exa}) with erroneous remainders is not identifiable in general.
We also note that the equality $\alpha-\beta=\frac{d}2$ holds in both cases.

\section{Conclusion}
The generalized Chinese Remainder Theorem has been used by Xia et. al. to model some signal processing problems.
In this paper we fist present a faithful historical account of the work of Jiushao Qin concerning generalized CRT.
Efficient procedures of constructing an approximate solution for the generalized CRT (simplified vmodel) based on
corrupted remainders are proposed in this paper. These two procedures improve that of Xia et. al. in that we only
need one computation of CRT. They are asymptotically optimal since at least one CRT is required. Our first procedure can
be adopted to the general CRT model. The ideas in our treatment of the second procedure might be of some independent interest.
We also provide a proof of the sharpness for the error bound $\frac{d}4$.

\section*{Acknowledgement} The author would like to thank Drs. Ian Blake and Kumar Murty for their
constructive comments.

\end{document}